\newtheorem{theorem}{Theorem}[section]
\title{Theorem Discovery Amongst Cyclic Polygons}\author{Philip Todd
\institute{Saltire Software\\ Portland OR USA}
\email{philt@saltire.com}
}
\begin{document}
\maketitle

\begin{abstract}
We examine a class of geometric theorems on cyclic $2n$-gons.  We prove that if we take n disjoint pairs of sides, each pair separated by an even number of polygon sides, then there is a linear combination of the angles between those sides which is constant.  We present a formula for the linear combination, which provides a theorem statement in terms of those angles.  We describe a program which uses this result to generate new geometry proof problems and their solutions.
\end{abstract}
\section{Introduction}
In \cite{AngleTheoremDiscovery}, a characterization is made of linear systems involving angle bisection conditions which are not full rank.  In such a system, one of the conditions is implied by the remainder, and, if the angle bisections are interpreted geometrically, this dependence may be stated in a number of different ways as a geometry theorem.  The characterization leads to a catalog of such linear systems.  An approach to theorem discovery is proposed wherein a linear system is initially selected, and then interpreted geometrically as a theorem.
In \cite{GeometryProofProblems}, a program is described which applies this approach, constructing a particular geometry theorem corresponding to a randomly selected  linear system from the catalog.  In order to reduce diagram complexity, the program is biased in favor of constructing cyclic polygons wherever possible.  In the case where it is able to construct a cyclic polygon using all the rows of the linear system, the theorem which is produced has the following form. Given a cyclic $2n$-gon, where $n-1$ specified pairs of sides are parallel, then a final specified pair of sides is also parallel.  For example, in a cyclic hexagon, with two pairs of opposite sides parallel, the third pair of sides is also parallel. (In passing, we note thet this theorem is not true for a cyclic octagon, but is for a cyclic decagon.)

In this presentation, we consider a generalization of the above class of theorems, where instead of making line pairs parallel, we allow line pairs to be given non zero, but determined angles.  While there is no geometric meaning to a parallel relationship between consecutive sides, replacing the parallelism by a defined non-zero angle permits adjacent sides of the polygon to be related. We examine the case of a $2n$-gon, with the angles between $n$ pairs of sides named.  We show that if the pairs share no side, and if the sides in each pair are either adjacent or seperated by an even number of polygon sides, then the named angles satisfy a particular linear relation.    

An approach to theorem discovery in this context mirrors and illustrates that described in  \cite{AngleTheoremDiscovery} and \cite{GeometryProofProblems}.  As a first step, a value of $n$ is chosen, and a set of line pairs conforming to our criterion selected from a catalog containing all such sets.  A diagram is produced directly which allows the coefficients of the constant linear combination to be computed.  

\section{Cyclic Polygons and Angle Bisectors}
\begin{figure}[h]%
\centering
\includegraphics[width=0.5\textwidth]{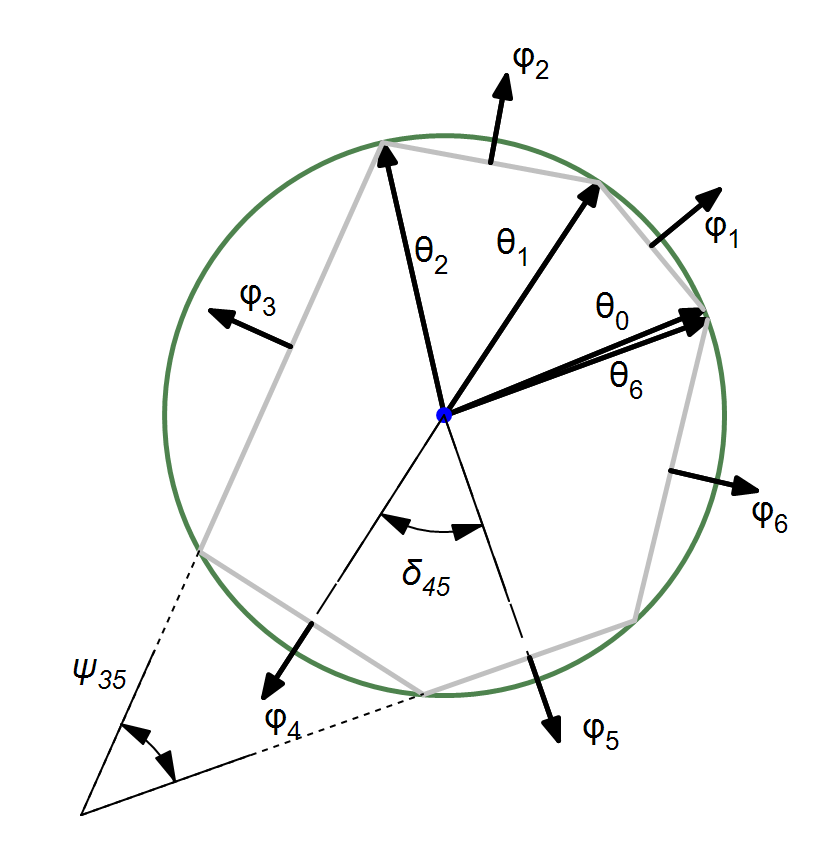}
\caption{$\theta_0 \ldots \theta_6$ are angles of position vectors. $\theta_6=\theta_1+2\pi$. $\phi_i=\frac{1}{2}(\theta_{i-1}+\theta_i)$  }
\label{notation}
\end{figure}

Let $P$ be a polygon whose vertices lie on the unit circle centered at the origin whose vertices $p_1,\ldots ,p_n$ have position vectors  $u_1, \ldots ,u_n$.  

We define $u_0=u_n$.

Let $\alpha(u,v)$ be the directed angle between vector $u$ and vector $v$. 
We define $\theta_0=0$ and for $i$ from $1$ to $n$:
$$\theta_i=\theta_{i-1}+\alpha(u_{i-1} , u_i)$$
As $u_0=u_n$, $\theta_n=\theta_0+2\pi W$ where $W$ is the winding number of the polygon about the origin.
For $i$ from $1$ to $n$ we define 
$$\phi_i=\frac {1}{2}(\theta_i+\theta_{i-1})$$
For $i$ and $j$ from $1$ to $n$ we define
$$\delta_{ij}=\phi_j-\phi_i$$
Define  $L_i$ to be the line passing through points 
$p_{i-1}$ and $p_i$ .  We will define $q_{ij}$ to be the intersection of  $L_i$ and $L_j$.  We define the angle 
$$\psi_{ij}=\angle p_{i-1}q_{ij}p_j$$  
\subsection{Cyclic Quadrilateral}
\begin{figure}[h]%
\centering
\includegraphics[width=0.55\textwidth]{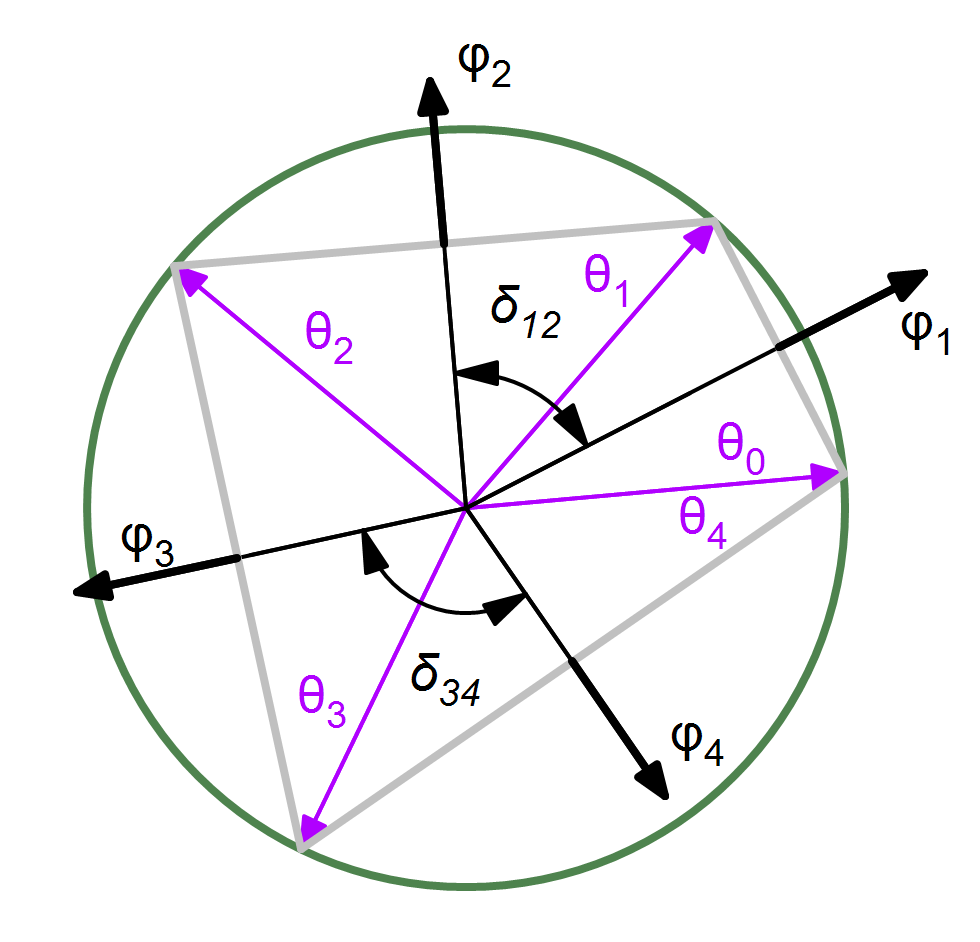}
\caption{A cyclic quadrilateral with winding number 1. }
\label{CyclicQuad1}
\end{figure}

We first examine the cyclic quadrilateral (Figure \ref{CyclicQuad1}). 

In the figure, the quadrilateral has winding number 1 about the circle center, hence $\theta_4=\theta_0+2\pi$.  The two indicated opposite angles of the quadrilateral have values 
$\pi-\delta_{12}$ and $\pi-\delta_{34}.$

The figure may be expressed by the following matrix equation:
\begin{equation}
\begin{pmatrix}
1 & 1 & 0 & 0 & 0 & -2 & 0 & 0 & 0  \\
0 & 1 & 1 & 0 & 0 & 0 & -2 & 0 & 0  \\
0 & 0 & 1 & 1 & 0 & 0 & 0 & -2 & 0   \\
0 & 0 & 0 & 1 & 1 & 0 & 0 & 0 & -2    \\
-1 & 0 & 0 & 0 & 1 & 0 & 0 & 0 & 0    \\
0 & 0 & 0 & 0 & 0 & -1 & 1 & 0 & 0    \\
0 & 0 & 0 & 0 & 0 & 0  & 0 & -1 & 1    \\
\end{pmatrix}
\begin{pmatrix}
\theta_0  \\
\theta_1  \\
\theta_2   \\
\theta_3  \\
\theta_4  \\
\phi_1   \\
\phi_2   \\
\phi_3   \\
\phi_4   \\
\end{pmatrix}
=
\begin{pmatrix}
0  \\
0  \\
0  \\
0  \\
2\pi   \\
\delta_{12}  \\
\delta_{34}  \\
\end{pmatrix}
\end{equation}
We transform the matrix equation by performing the following row operations:

$R1\leftarrow R1+R5$, $R2\leftarrow R2+2\times R6$, and $R3\leftarrow R3+2\times R7$, which, after eliminating the zero columns, gives this matrix equation:

\begin{equation} \label{eq:matrixeq}
\begin{pmatrix}
1 & 0 & 0 & 1 & -2 &  0   \\
1 & 1 & 0 & 0 &-2 & 0   \\
0 & 1 & 1 & 0 & 0 & -2    \\
0 & 0 & 1 & 1 & 0 & -2     \\
\end{pmatrix}
\begin{pmatrix}
\theta_1  \\
\theta_2   \\
\theta_3  \\
\theta_4  \\
\phi_1   \\
\phi_3   \\
\end{pmatrix}
=
\begin{pmatrix}
2\pi  \\
2\delta_{12} \\
0  \\
2\delta_{34}  \\
\end{pmatrix}
\end{equation}
The matrix can be triangularized using the algorithm of \cite{Tupero}.  Let $R_i$ be the i'th row of the original matrix, and $T_i$ the i'th row of the triangularized matrix, then 
$T_1=R_1$, and for $i$ from $2$ to $n$:
\begin{equation}
T_i=R_i-T_{i-1}
\end{equation}
Using this algorithm, our triangularized matrix equation is:
\begin{equation} \label{eq:triangularmatrix}
\begin{pmatrix}
1 & 0 & 0 & 1 & -2 &  0   \\
0 & 1 & 0 & -1 & 0 & 0   \\
0 & 0 & 1 & 1 & 0 & -2    \\
0 & 0 & 0 & 0 & 0 & 0     \\
\end{pmatrix}
\begin{pmatrix}
\theta_1  \\
\theta_2   \\
\theta_3  \\
\theta_4  \\
\phi_1   \\
\phi_3   \\
\end{pmatrix}
=
\begin{pmatrix}
2\pi \\
2\delta_{12} -2\pi \\
-2\delta_{12} +2\pi   \\
2\delta_{34} +2\delta_{12} -2\pi \\
\end{pmatrix}
\end{equation}
Consistency of this system requires 
\begin{equation}
2\delta_{34} +2\delta_{12} -2\pi =0
\end{equation}
or
\begin{equation}
\delta_{34} +\delta_{12} =\pi
\end{equation}
In terms of $\psi_{12}$ and $\psi_{34}$  
\begin{equation}
\pi-\psi_{12}+\pi-\psi_{34}=\pi
\end{equation}
Hence $\psi_{12}+\psi_{34}=\pi$, which is the familiar result that the opposite angles of a cyclic quadrilateral are supplementary.

\subsection{General Theorem}
The following general theorem may be proved by an analogous approach to that employed in the above section.

\begin{theorem}
\label{generaltheorem}
Given a cyclic 2n-gon with winding number $W$ about the circumcircle's center, and $n$ ordered pairs $(a_1,b_1) \ldots (a_n,b_n)$ such that 

$\{ a_i \} \cup \{b_i\} = \{ 1 \ldots 2n \}$,
$b_i>a_i$
and $b_i-a_i$ is odd for each $i$

\begin{equation}
\label{generaltheoremeqn}
\sum_{i=1}^n (-1)^{b_i}\delta_{a_ib_i}=W\pi
\end{equation}
\end{theorem}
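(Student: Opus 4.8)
The plan is to bypass the matrix machinery used in the quadrilateral case and instead evaluate the left-hand side of \eqref{generaltheoremeqn} directly from the definitions, exploiting the two combinatorial hypotheses (the odd parity of $b_i-a_i$ and the fact that the pairs partition $\{1,\ldots,2n\}$) to collapse the sum into a telescoping form. I expect this to be shorter than the ``analogous approach'' the author foreshadows.

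First I would substitute $\delta_{a_ib_i}=\phi_{b_i}-\phi_{a_i}$, so the quantity to evaluate becomes $\sum_{i=1}^n (-1)^{b_i}(\phi_{b_i}-\phi_{a_i})$. The crucial observation is that since $b_i-a_i$ is odd, $(-1)^{a_i}=-(-1)^{b_i}$; hence the $i$-th summand rewrites as $(-1)^{b_i}\phi_{b_i}+(-1)^{a_i}\phi_{a_i}$, so that each index now carries the sign $(-1)^{\text{index}}$. Because the pairs $(a_i,b_i)$ partition $\{1,\ldots,2n\}$, summing over $i$ visits every index in $\{1,\ldots,2n\}$ exactly once, and the expression collapses to the single sum $\sum_{k=1}^{2n}(-1)^k\phi_k$, with no trace of the particular pairing remaining.

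Next I would substitute $\phi_k=\tfrac12(\theta_k+\theta_{k-1})$ and show this alternating sum telescopes. Writing $\sum_{k=1}^{2n}(-1)^k\phi_k=\tfrac12\sum_{k=1}^{2n}(-1)^k(\theta_k+\theta_{k-1})$ and reindexing the $\theta_{k-1}$ contribution by $j=k-1$, the interior terms $\theta_1,\ldots,\theta_{2n-1}$ all cancel, leaving $\tfrac12(\theta_{2n}-\theta_0)$. Finally, invoking $\theta_0=0$ together with the winding-number relation $\theta_{2n}=\theta_0+2\pi W$ for a $2n$-gon yields $\tfrac12(2\pi W)=W\pi$, the claimed identity.

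The reindexing and telescoping are routine; the step I expect to demand the most care is the sign bookkeeping in the second paragraph, since the whole argument hinges on the parity hypothesis converting the \emph{difference} $\phi_{b_i}-\phi_{a_i}$ into the \emph{sum} $(-1)^{b_i}\phi_{b_i}+(-1)^{a_i}\phi_{a_i}$, and on the partition hypothesis guaranteeing that each index is then counted exactly once with the correct sign. I would sanity-check both by specializing to $n=2$ and recovering the supplementary-angle relation already derived for the cyclic quadrilateral.
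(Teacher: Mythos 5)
Your proposal is correct, and it reaches the identity by a genuinely different route from the paper. The paper assembles the full linear system in the variables $\theta_1,\ldots,\theta_{2n},\phi_1,\ldots,\phi_n$ (equation \eqref{eq:matrixeq2}), triangularizes it with the recurrence $T_i=R_i-T_{i-1}$, observes that the final row has coefficients $R_i=(-1)^{a_i}+(-1)^{b_i}=0$ precisely because $b_i-a_i$ is odd, and then reads off the theorem as the consistency condition $S=\sum_j(-1)^jQ_j=0$. You instead evaluate $\sum_i(-1)^{b_i}\delta_{a_ib_i}$ directly: the parity hypothesis converts each difference $\phi_{b_i}-\phi_{a_i}$ into $(-1)^{b_i}\phi_{b_i}+(-1)^{a_i}\phi_{a_i}$, the partition hypothesis collapses the sum to $\sum_k(-1)^k\phi_k$, and substituting $\phi_k=\tfrac12(\theta_k+\theta_{k-1})$ telescopes to $\tfrac12(\theta_{2n}-\theta_0)=W\pi$. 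The two arguments are close cousins --- the alternating combination $\sum_j(-1)^j(\cdot)$ that you introduce by hand is exactly the left null vector that the paper's triangularization algorithm produces --- but yours is shorter, self-contained, and makes the role of each hypothesis transparent, whereas the paper's version sits inside the rank-deficient-linear-system framework that drives its theorem-discovery machinery (the dependency among rows is itself the object being catalogued). One small bookkeeping point: the paper's displayed definition of the right-hand side ($Q_1=2\pi W$, $Q_{b_i}=2\delta_{a_i,b_i}$) carries the factor of $2$ that your direct computation never needs to introduce; your sanity check against the cyclic quadrilateral ($\delta_{12}+\delta_{34}=\pi$) does confirm the normalization.
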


\begin{proof}
Let $P_{i,j}=-2$ where $j=a_i$ or $j=b_i$ and 0 otherwise, let $Q_1=2\pi W$. For $i>1$, let $Q_j=2\delta_{a_i,b_i}$ if $j=b_i$ and $0$ otherwise.

Analogous to equation (\ref{eq:matrixeq}) we have:
\begin{equation} \label{eq:matrixeq2}
\begin{pmatrix}
1 & 0 & \cdots & 0 & 1 &P_{1,1} & \cdots & P_{n,1}   \\
1 & 1 & \cdots &0 &  0 &P_{1,2} & \cdots & P_{n,2}   \\
\vdots & \vdots & \ddots & \vdots & \vdots & \vdots &\vdots   &\vdots \\
0 & 0 & \cdots & 1 & 1 & P_{1,2n} & \cdots &P_{n,2n}     \\
\end{pmatrix}
\begin{pmatrix}
\theta_1  \\
\theta_2   \\
\vdots \\
\theta_{2n-1}  \\
\theta_{2n}  \\
\phi_1   \\
\vdots  \\
\phi_n   \\
\end{pmatrix}
=
\begin{pmatrix}
Q_1 \\
Q_2 \\
\vdots  \\
Q_{2n} \\
\end{pmatrix}
\end{equation}
Triangulation yields a matrix equation analogous to that of (\ref{eq:triangularmatrix}(
\begin{equation} \label{eq:matrixeq3}
\begin{pmatrix}
\vdots & \vdots & \vdots & \vdots & \vdots &\vdots    \\
0& \cdots & 0 &  R_1  & \cdots &   R_n \\
\end{pmatrix}
\begin{pmatrix}
\theta_1  \\
\vdots \\
\theta_{2n}  \\
\phi_1   \\
\vdots  \\
\phi_n   \\
\end{pmatrix}
=
\begin{pmatrix}
\vdots  \\
S \\
\end{pmatrix}
\end{equation}
where 
\begin{equation}
R_i= \sum_{j=1}^{2n}(-1)^jP_{i,j}
\end{equation}
and 
\begin{equation}
S= \sum_{j=1}^{2n}(-1)^jQ_j
\end{equation}
Hence
\begin{equation}
R_i= (-1)^{b_i}+(-1)^{a_i}
\end{equation}
and
\begin{equation}
S=\sum_{i=1}^n (-1)^{b_i}2\delta_{a_ib_i}-2W\pi
\end{equation}
As $b_i-a_i$ is odd,  $R_i=0$ for all $i$. 
Hence for consistency $S=0$.
Hence the result.
\end{proof}
\section{Automated Discovery of Cyclic Polygon Theorems}
We describe a mechanized process for automatic theorem generation for angles in cyclic polygons.  Each theorem will establish a relationship between angles in a cyclic polygon.  The process breaks down into three steps.  First a set of side pairs is chosen which satisfy the criteria of Theorem \ref{generaltheorem}.  Secondly, a specific location is decided for the vertices of the polygon.and a geometry diagram created.  Angles between the side pairs are given names and drawn on the diagram.  Thirdly, (\ref{generaltheoremeqn}) is expressed in terms of these angle names, yielding the conclusion of our theorem statement. 

\subsection{Choosing Side Pairs}
\begin{figure}[h]%
\centering
\includegraphics[width=0.95\textwidth]{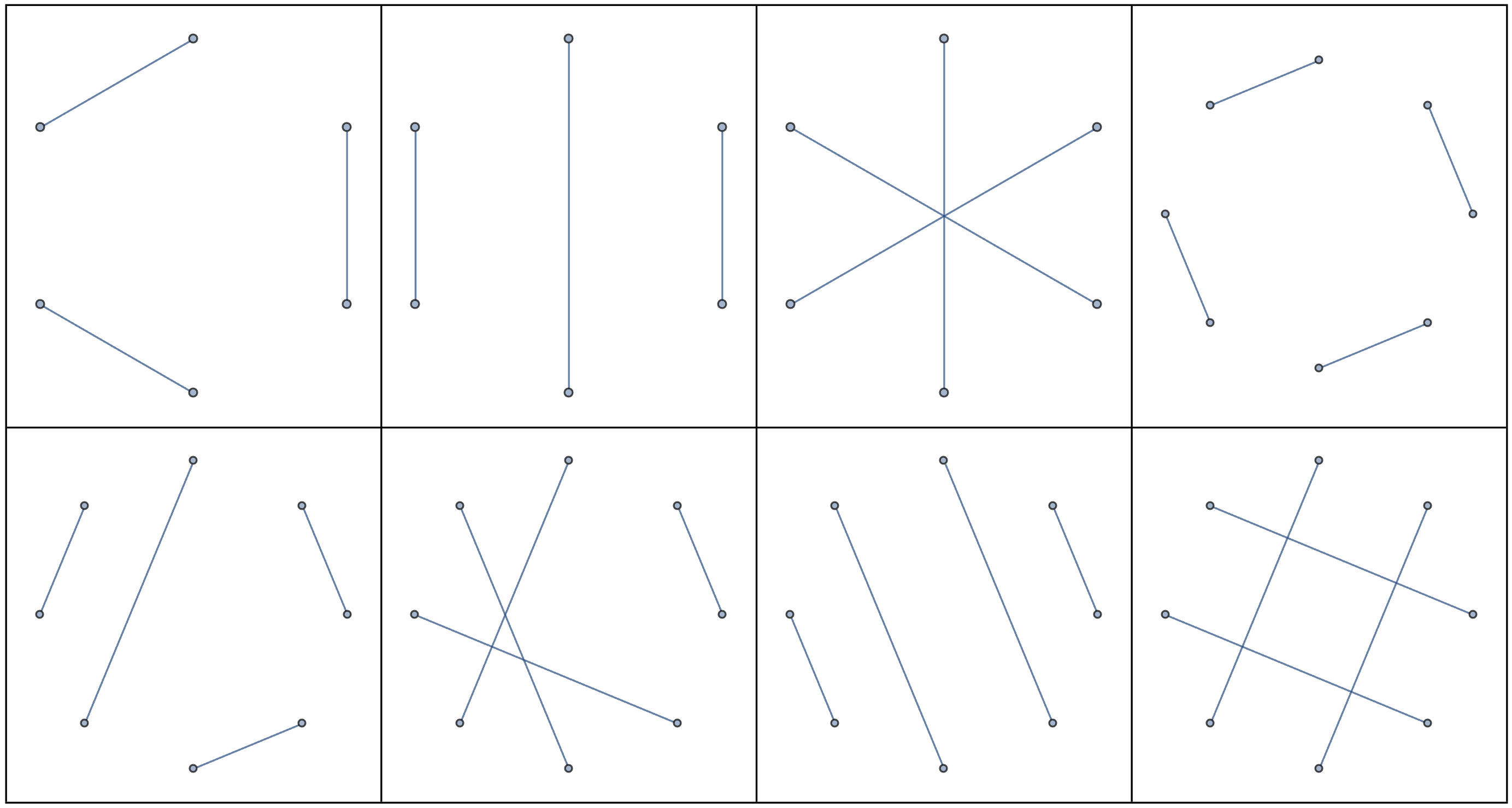}
\caption{Graphs representing possible conforming sets of edge pairs for $n=3, 4$.  Vertices on these graphs correspond to polygon sides.  Edges indicate the side pairs whose angles will be specified}
\label{pairdiagrams}
\end{figure}

In order to use theorem \ref{generaltheorem} to create cyclic polygon theorems, we need to first find a set of ordered pairs $(a_1,b_1) \ldots (a_n,b_n)$ such that 
$\{ a_i \} \cup \{b_i\} = \{ 1, \ldots ,2n \}$ and $|b_i-a_i|$ is odd for each $i$

We then sort the elements so that $a_i<b_i$.
Each pair $(a_i,b_i)$ contains an odd and an even integer.  Hence any set of pairs corresponds to a 1-1 mapping between  $A=\{1,3,\ldots ,2n-1\}$ and $B=\{2,4,\ldots ,2n\}$.  A random such mapping is selected.  Applying theorem \ref{generaltheorem} yields an expression for a linear combination  of the differences in direction $\delta_{a_ib_i}$.  

The number of such mappings is $n!$.  However, when considered as pairings between sides of a cyclic polygons, there are many which are simply rotated or reflected images.  The number of non-isomorphic patterns for $n=2$  to $7$ are $1, 3, 5, 17, 53, 260$ \cite{SequenceForSidePairs}.  Figure \ref{pairdiagrams} depicts these patterns for $n$  equal to 3 and 4.

\subsection{Diagram Creation}
A diagram should be constructed containing the circle and the cyclic polygon.  Points $q_{a_ib_i}$ should be computed  and angles  $\psi_{a_ib_i}$ should be marked on the diagram and named.  We need to relate the named angles: $\psi_{a_ib_i}$ to the angles treated in our theorem: $\delta_{a_ib_i}$.

One straightforward approach is to create a polygon whose vertices are arranged clockwise around the circle, making a single circuit, but at non-regular intervals.  Constraints may be set on the  location of the polygon vertices to ensure that the angle between chosen side-pairs is not too small for their intersection to appear on the diagram.  This intersection would be displayed, along with a named angle (Figure \ref{decagonexample}).
\subsection{Geometric Angle Conclusion}
The relation between $\delta_{ij}$ and $\psi_{ij}$ depends on whether the circle center is on the same side of $w_i$ and $w_j$ and whether a rotation from $w_i$ to $w_j$ is clockwise or counter-clockwise.

When the polygon is convex and has a winding number of 1 around the circumcenter, the formula is as follows

Let $s=sgn(w_i\wedge w_j)$ and
 
\begin{equation}
\delta_{ij}=\pi-s\cdot (\psi_{ij})
\end{equation}

\section{Example}
\begin{figure}[h]%
\centering
\includegraphics[width=0.28\textwidth]{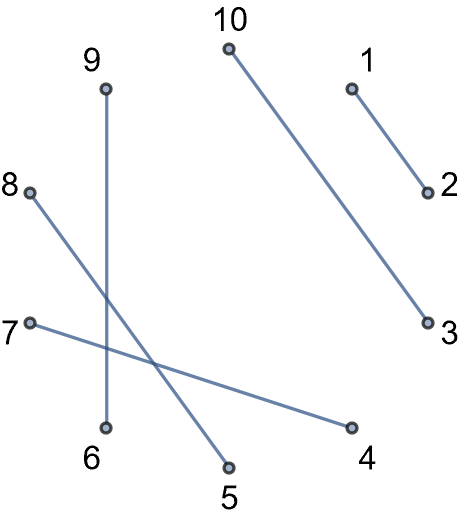}
\caption{Graph for the set of side pairings $\{(1,2),(3,10),(4,7),(5,8),(6,9)\}$}
\label{decagonpairings}
\end{figure}
As an example, we start with one of the sets of pairings for the decagon: $\{(1,2),(3,10),(4,7),(5,8),(6,9)\}$ (Figure \ref{decagonpairings}).  

Applying Theorem \ref{generaltheorem} to this set of side pairings, and assuming a winding number of 1 gives the following:
\begin{equation}
\label{decagoneqn}
\delta_{1,2}+\delta_{3,10}-\delta_{4,7}+\delta_{5,8}-\delta_{6,9}=\pi
\end{equation}

A cyclic decagon is drawn, where the vertices are placed at random subject to constraints that the polygon does not self-intersect, the sides are not too small, and the winding number is 1.  Angles between the five side pairs are marked and named.  In the case of non-contiguous sides, the sides are extended in the diagram to their intersection (Figure \ref{decagonexample}).
\begin{figure}[h]%
\centering
\includegraphics[width=0.55\textwidth]{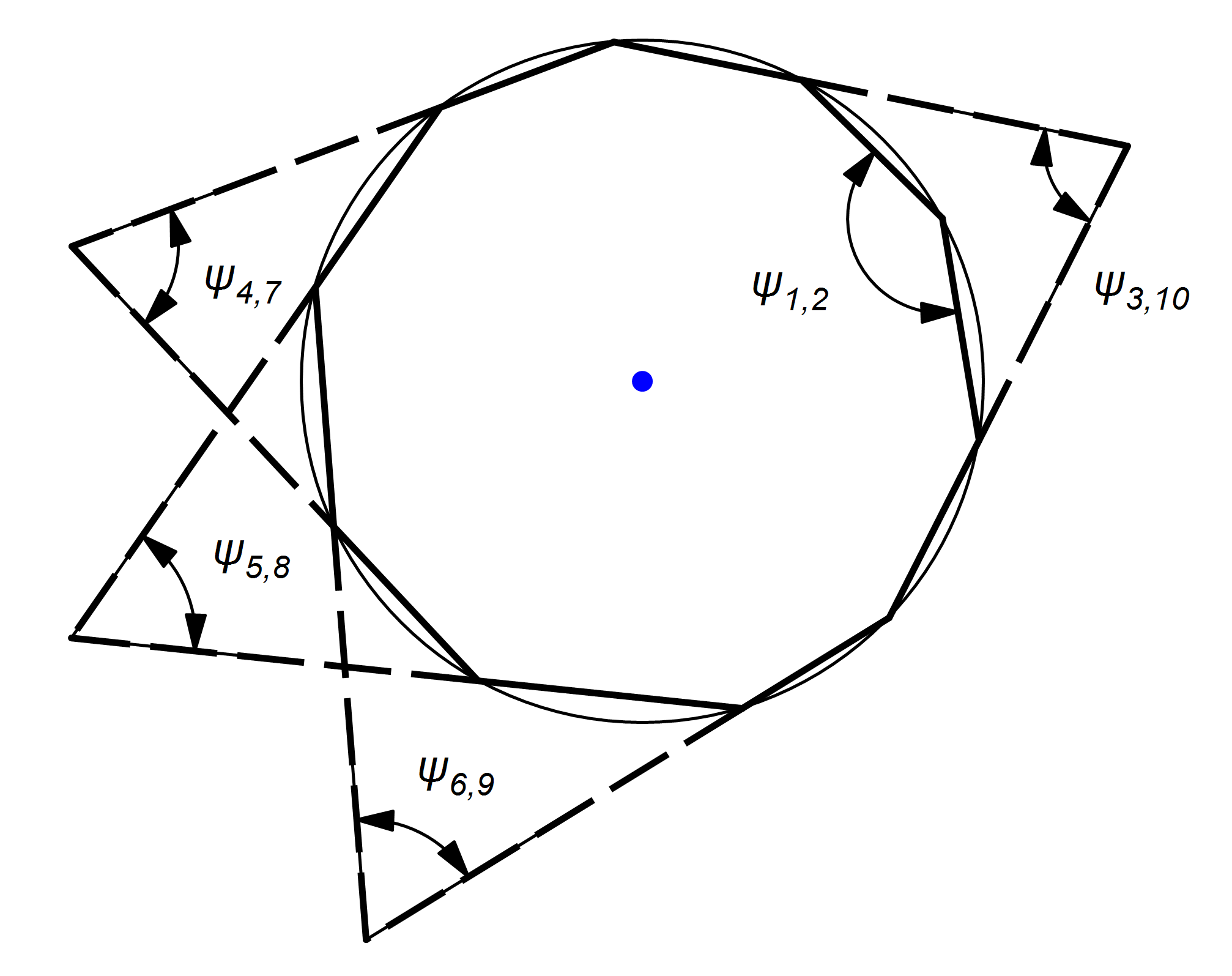}
\caption{Cyclic decagon with angles marked for side pairings $\{(1,2),(3,10),(4,7),(5,8),(6,9)\}$}
\label{decagonexample}
\end{figure}
As their angles are counter-clockwise, we have:
\begin{equation}
\begin{matrix}
\delta_{1,2}=\pi-\psi_{1,2}\\
\delta_{4,7}=\pi-\psi_{4,7}\\
\delta_{5,8}=\pi-\psi_{5,8}\\
\delta_{6,9}=\pi-\psi_{6,9}\\
\end{matrix}
\end{equation}
However, the angle for side pair (3,10) is clockwise, so
\begin{equation}
\label{cwangle}
\delta_{3,10}=\pi+\psi_{3,10}\\
\end{equation}
Substituting these values into (\ref{decagoneqn}) gives the following
\begin{equation}
\label{decagoneqnpsi}
\pi-\psi_{1,2}+\pi+\psi_{3,10}-(\pi-\psi_{4,7})+\pi-\psi_{5,8}-(\pi-\psi_{6,9})=\pi
\end{equation}
which can be presented as this theorem conclusion
\begin{equation}
\label{decagonconclusion}
\psi_{3,10}+\psi_{4,7}+\psi_{6,9}=\psi_{1,2}+\psi_{5,8}
\end{equation}
We note that it would be possible to draw a polygon such that the angle between lines 3 and 10 is clockwise.  In this case (\ref{cwangle}) becomes
\begin{equation}
\label{cwangle}
\delta_{3,10}=\pi-\psi_{3,10}\\
\end{equation}
and the theorem conclusion (\ref{decagonconclusion}) becomes
\begin{equation}
\label{decagonconclusion2}
\psi_{4,7}+\psi_{6,9}=\psi_{1,2}+\psi_{5,8}+\psi_{3,10}
\end{equation}

\section{An Automated Problem Generator}
\begin{figure}[h]%
\centering
\includegraphics[width=0.99\textwidth]{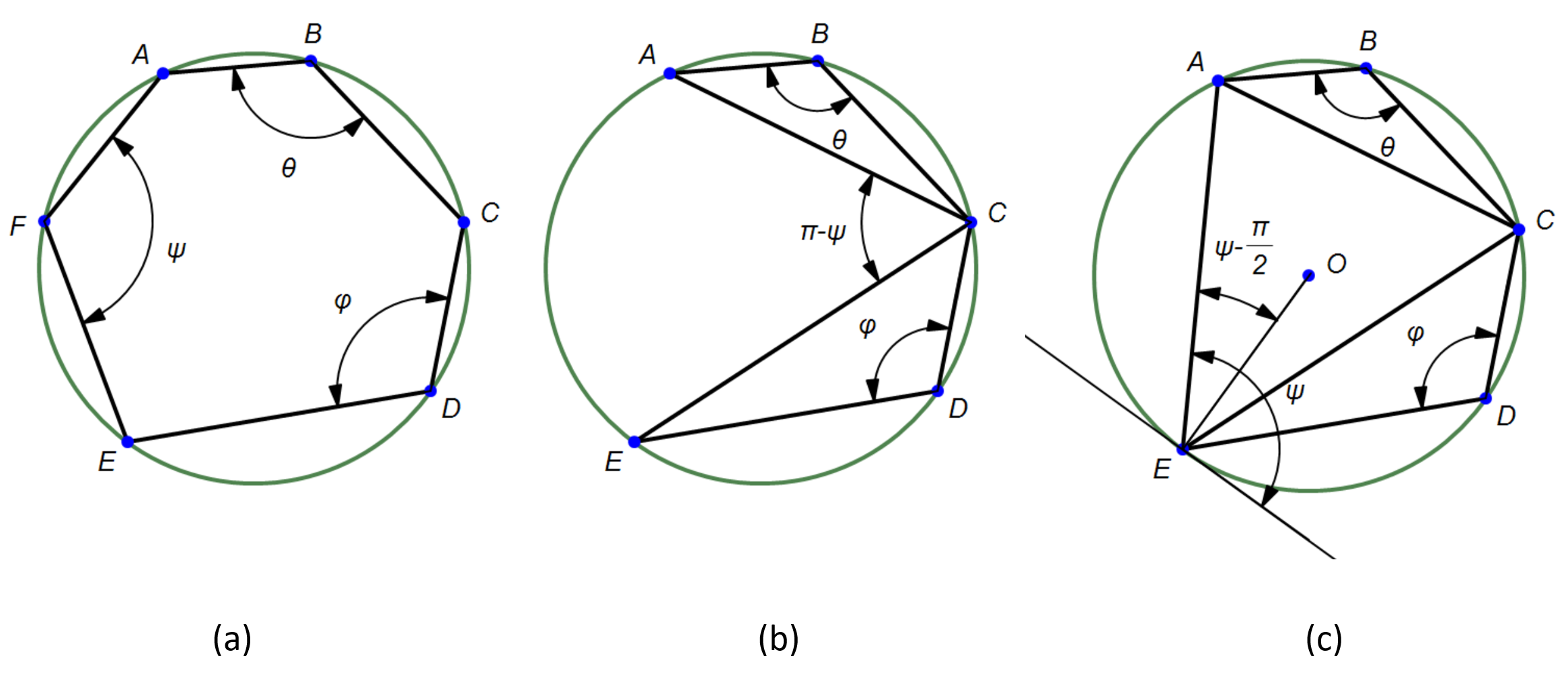}
\caption{(a) Given convex cyclic hexagon $ABCDEF$, $\angle ABC + \angle CDE + \angle EFA = 2\pi$. (b) Given convex cyclic pentagon $ABCDE$,  $\angle ABC + \angle CDE - \angle ECA = \pi$. (c) Given convex cyclic hexagon $ABCDE$ with center $O$,  $\angle ABC + \angle CDE + \angle OEA = \frac{3\pi}{2}$. }
\label{pointmerge}
\end{figure}

An automated problem generator based on the above approach was created  \cite{problemGenerator}, and introduces further elaborations.  First, rather than the vertices of the polygon lying in order around the circle, we allow any permutation of the vertices.  With any but the identity permutation, the polygon will be self-intersecting and will, in fact, be made up of some subset of the sides and diagonals of a convex polygon.  

A theorem for a polygon with an odd number of sides may be constructed from a polygon with one more side by merging a pair of points in the larger polygon. For example, figure \ref{pointmerge} (a) shows a theorem linking three alternating angles of cyclic polygon $ABCDEF$.  
Figure \ref{pointmerge}(b) shows the result of merging points $F$ and $C$ of the hexagon.  Edges $EF$ and $FA$ now become the diagonals $EC$ and $CA$ of a cyclic pentagon $ABCDE$, and the theorem relates two angles of the pentagon and an angle between two of its diagonals.
If the points which are merged share an edge of the larger polygon, that edge will be replaced by a line joining the center of the circle to the merged point, and the theorem statement will be altered by $\frac{\pi}{2}$.
Figure \ref{pointmerge}(c) shows the result of merging points $F$ and $E$ of the hexagon $ABCDEF$.  Angle $AFE$ becomes, in the limit, the angle between $AE$ and the tangent to the circle at $E$, which differs from $OE$ by $\frac{\pi}{2}$ (where $O$ is the circle center).

In addition to presenting the theorems, our problem generator \cite{problemGenerator} can create a human readable proof by propagating a handful of simple angle generation rules.  
A single step of the proof generator proceeds by finding all the angles determined by the application of the following set of rules to the known angles:
\begin{enumerate}
	\item If $AD$ is between $AB$ and $AC$ then $\angle BAC = \angle DAB + \angle DAC$.
    \item The angles of a triangle add to $\pi$.
    \item Two angles which form a line add to $\pi$.
    \item Two angles on the same side of the same chord of a circle are the same.
     \item Opposite angles of a cyclic quadrilateral add to $\pi$.
     \item The angle at the center of a circle is twice the angle on the same chord at the circumference.
     \item If $AB$ is a chord of circle centered $O$ then $\angle OAB = \angle OBA = \frac{1}{2}(\pi-\angle AOB)$.
\end{enumerate}
\begin{figure}[h]%
\centering
\includegraphics[width=0.65\textwidth]{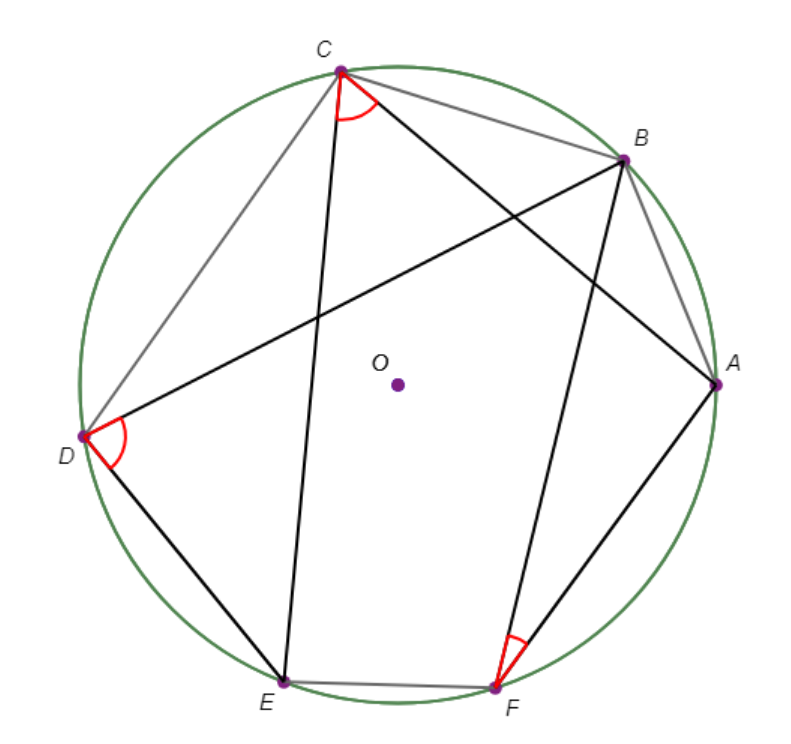}
\caption{Automatically generated diagram with problem statement: Let $ABCDEF$ be a cyclic hexagon with center $O$. Prove that $BDE = ACE+AFB$}
\label{problemgenerated}
\end{figure}
Our theorems can all be expressed as a linear combination of the angles and $\pi$.  The result of each angle generation rule is a linear combination of its inputs and $\pi$.  Hence a convenient way of storing the computed value of any angle is as a vector of coefficients of the given angles and $\pi$. 
The proof generator is started with the known angles and proceeds to compute values for all angles which can be computed from those by application of the above rules (breadth first).  
The angle value is stored, as a vector of coefficients of the known angles and $\pi$.  Along with the value, a reference is stored to the rule which was fired, and the arguments it was applied to.
If the process finds a new value for a given angle, this new value can be equated to the given angle yielding a linear expression. The sequence of rule applications used to reach the final expression may be turned into the words of a proof.

For example, in figure \ref{problemgenerated}, we set $\angle BDE=(1,0,0,0)$, $\angle ACE=(0,1,0,0)$ , $\angle AFB=(0,0,1,0)$.

Applying (4) to $\angle AFB$ gives $\angle ACB=(0,0,1,0)$.

Applying (4) to $\angle BDE$ gives $\angle BCE=(1,0,0,0)$.

Applying (1) to $\angle BCE$ and $\angle ACE$ gives $\angle ACB=(1,-1,0,0)$.
 
But we already have $\angle ACB=(0,0,1,0)$ so our linear relation can be expressed as $(1,-1,-1,0)=0$.

This would result in the following proof.

Let $BDE=x$. Let $ACE=y$. $Let AFB=z$. 

As $AFB$ and $ACB$ stand on the same chord, $ACB=AFB$, so $ACB=z$.

As $BDE$ and $BCE$ stand on the same chord, $BCE=BDE$, so $BCE=x$.

As $ACE=y$, $ACB=x-y$.

But $ACB=z$, so $x-y=z$, or $x=y+z$, or $BDE=ACE+AFB$.

If the proof generator does not run to completion, additional geometry is added to the diagram in two phases.  First, missing line segments are added, which allow additional angles to be derived (angles are only computed when the line segments joining their vertices are present). A second step is for an additional angle to be specified as known, and given a name.  As long as the new angle is not in the span of the known angles, it cannot show up in the eventual linear relation, and hence is guaranteed to simplify out in the course of the proof.
\begin{figure}[h]%
\centering
\includegraphics[width=0.65\textwidth]{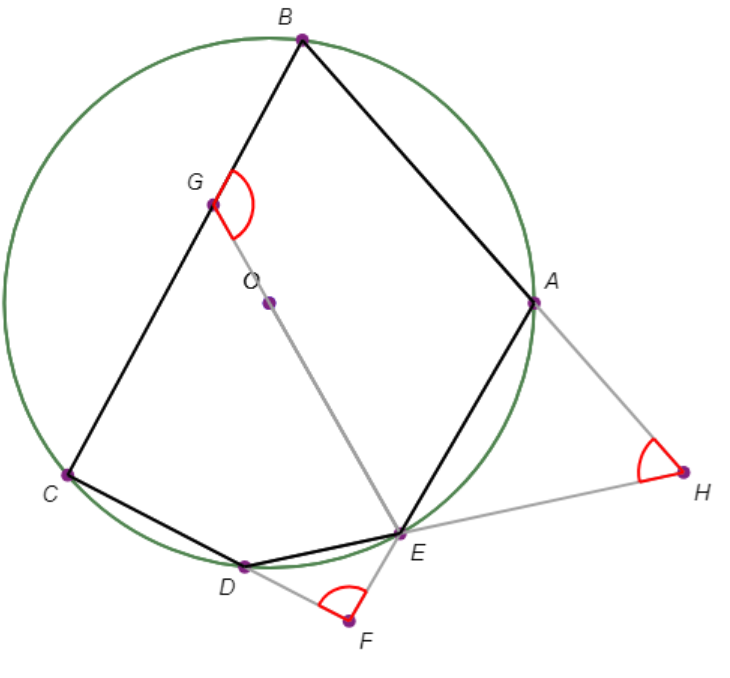}
\caption{Let $ABCDE$ be a cyclic pentagon with center $O$. Let $F$ be the intersection of $AE$ and $DC$. Let $G$ be the intersection of $OE$ and $CB$. Let $H$ be the intersection of $ED$ and $BA$. Prove that $BGE+AHE = DFE+90$.}
\label{problemgenerated2}
\end{figure}

 For example the following is the proof generated for the theorem stated in figure \ref{problemgenerated2}. 
 
 \begin{theorem}
 Let $ABCDE$ be a cyclic pentagon with center $O$. Let $F$ be the intersection of $AE$ and $DC$. Let $G$ be the intersection of $OE$ and $CB$. Let $H$ be the intersection of $ED$ and $BA$. $BGE+AHE = DFE+90$.
\end{theorem}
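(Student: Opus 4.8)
The plan is to derive this pentagon theorem from Theorem~\ref{generaltheorem} applied to a cyclic hexagon, using the point-merge device of Figure~\ref{pointmerge}(c); the presence of the centre $O$ and the $90^\circ$ term are the tell-tale signs that one edge of the underlying hexagon has degenerated to a radius. As a cross-check I would reproduce the same identity by the elementary angle-chase of the proof generator, propagating the seven rules from symbolic values assigned to the three named angles.

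First I would regard the pentagon $ABCDE$ as the limit of a convex cyclic hexagon $ABCDEF$ whose sixth vertex $F$ has been merged onto $E$. With the paper's convention that side $L_i$ joins $p_{i-1}$ to $p_i$, take $p_1=A,\dots,p_6=F$, so that $L_1=FA$, $L_2=AB$, $L_3=BC$, $L_4=CD$, $L_5=DE$, $L_6=EF$. The three pairs of opposite sides $(1,4),(2,5),(3,6)$ all have odd index-difference, so Theorem~\ref{generaltheorem} with $W=1$ gives
\begin{equation}
\delta_{1,4}-\delta_{2,5}+\delta_{3,6}=\pi ,
\end{equation}
since $(-1)^4=+1$, $(-1)^5=-1$, $(-1)^6=+1$. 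Under the merge $F\to E$ the side $L_1=FA$ becomes the chord $EA$, so $\delta_{1,4}$ is the side-pair angle at $F=AE\cap CD$, i.e.\ $\psi_{1,4}=\angle DFE$; likewise $\delta_{2,5}$ is the angle at $H=AB\cap DE$, giving $\psi_{2,5}=\angle AHE$. The remaining side $L_6=EF$ degenerates to the tangent at $E$, and $\delta_{3,6}$ becomes the angle between $L_3=BC$ and that tangent.

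Second I would convert each $\delta$ to its geometric angle by $\delta_{ij}=\pi-s\,\psi_{ij}$, reading the orientation sign $s$ off the diagram exactly as in the decagon computation (\ref{decagoneqnpsi}). The only subtle term is $\delta_{3,6}$: its side $L_6$ is the tangent at $E$, which is perpendicular to the radius $OE$ used to define $G=OE\cap CB$, so the tangent-angle differs from $\angle BGE$ by precisely $\tfrac{\pi}{2}$, and this is where the $90^\circ$ enters the conclusion. Substituting the three relations into the displayed identity, the three free $\pi$'s cancel ($\pi-\pi+\pi=\pi$), and with the signs forced by Figure~\ref{problemgenerated2} the angle terms collapse to
\begin{equation}
\angle BGE+\angle AHE=\angle DFE+\tfrac{\pi}{2}.
\end{equation}
The generator's rule-based route reaches the same place without the hexagon: rules~1, 2 and 4 express the auxiliary angles at $F$, $G$, $H$ in a common basis of inscribed angles, while rule~7 (the isosceles relation in triangle $OEX$) together with rule~6 handles every angle involving $OE$, and it is again this radius step that reintroduces the $\tfrac{\pi}{2}$; equating a single angle (e.g.\ an inscribed angle at $C$) computed two ways then yields the linear relation.

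The main obstacle is sign bookkeeping rather than any deep fact. Each of the three intersection angles can be read clockwise or counter-clockwise and may fall inside or outside the circle, so each $\delta_{ij}=\pi\pm\psi_{ij}$ carries a sign taken from the actual configuration, precisely the care that forced the case split between (\ref{decagonconclusion}) and (\ref{decagonconclusion2}). In particular the orientation of the radius $OE$ relative to the tangent at $E$ fixes whether the shift is $+\tfrac{\pi}{2}$ or $-\tfrac{\pi}{2}$, and only one choice produces the stated $+90^\circ$; arranging all these signs so that every spurious $\pi$ cancels is the one place the argument genuinely demands attention.
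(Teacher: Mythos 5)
Your argument is essentially correct, but it takes a genuinely different route from the one the paper gives for this statement. The paper's proof is the elementary angle chase emitted by its proof generator: draw the auxiliary chord $BE$, introduce the extra unknown $\angle AEH=w$, propagate the linear-angle rules (triangle sum, angles on a line, the cyclic quadrilateral $CDEB$, the isosceles triangle $BEO$, and the central-angle rule) until $\angle BAE$ has been computed in two ways, and equate the two expressions so that $w$ cancels and $x+90=y+z$ remains. You instead derive the identity top-down from Theorem \ref{generaltheorem}: you treat $ABCDE$ as a cyclic hexagon with the sixth vertex merged onto $E$, apply the theorem to the opposite-side pairing $(1,4),(2,5),(3,6)$ to get $\delta_{1,4}-\delta_{2,5}+\delta_{3,6}=\pi$, and read the degenerate side $L_6$ as the tangent at $E$, whose perpendicularity to the radius $OE$ is exactly where the $90^\circ$ enters. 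That is the mechanism by which the paper \emph{generates} this theorem (Section 5, Figure \ref{pointmerge}(c)), not the proof it prints, and your pairing, your sign pattern $(+,-,+)$, and your identification $\psi_{1,4}=\angle DFE$, $\psi_{2,5}=\angle AHE$ all check out. Your route buys structural insight and immediate generalization, and it can even be made non-limiting: Theorem \ref{generaltheorem} is a statement about the $\theta_i$ and $\phi_i$ and holds verbatim when two vertices coincide, with $\phi_6=\theta_5$ then being the direction of $OE$, so only the geometric reading of $\delta_{3,6}$ needs care. What it costs is exactly what you flag at the end: the conversions $\delta_{ij}=\pi\pm\psi_{ij}$ and the $\pm\tfrac{\pi}{2}$ shift each carry a configuration-dependent sign, and until those are determined from Figure \ref{problemgenerated2} (rather than chosen so that the desired conclusion appears) the derivation could equally well land on $\angle BGE+\angle AHE=\angle DFE-90$ or a sign-permuted variant. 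To make your version a complete proof you must carry out that sign determination explicitly; the paper's longer angle chase sidesteps the issue by working with concrete non-reflex angles throughout, which is precisely why it is the form the generator outputs.
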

\begin{proof}
Draw line $BE$.

Let $\angle DFE=x$. Let $\angle BGE=y$. Let $\angle AHE=z$. 

Let $\angle AEH=w$.

As$\angle  AHE=z$, $\angle EAH=180-z-w$.

As $\angle EAH=180-z-w$, $\angle EAB=z+w$.

As $\angle AEH=w$, $\angle HEF=180-w$.

As $\angle FEH=180-w$, $\angle FED=w$.

As $\angle DEF=w$, $\angle EDF=180-x-w$.

As $\angle EDF=180-x-w$, $\angle EDC=x+w$.

As $CDEB$ is a cyclic quadrilateral, $\angle CBE=180-\angle CDE$, so $\angle CBE=180-x-w$.

As $\angle EBG=180-x-w$, $\angle BEG=x+w-y$.

As triangle $BEO$ is isosceles, $\angle BOE=2y-2x-2w+180$.

As $\angle BOE$ is at the center of a circle on the same chord, but in the opposite direction to $\angle BAE$, $\angle BOE=360-2BAE$, so $\angle BAE=x+w-y+90$.

But $\angle BAE=z+w$, so $x+w-y+90=z+w$, or $x+90=y+z$, or $\angle DFE+90=\angle BGE+\angle AHE$.
\end{proof}

Automated proofs can be simply adapted to provide step-by-step solutions to problems framed in terms of determining an unknown angle, either numerically or algebraically.

Our generator can be set to create problem collections.  It is important for such collections to eliminate problems which are simply rotated or reflected duplicates.  For a hexagon, for example, we see from figure \ref{pairdiagrams} that there are 3 different pair patterns.  When we combine this with the different permutations of 6 points, and remove rotated and reflected duplicates, we end up with 49 distinct diagrams.

For the pentagon, we can choose any pair of points to merge, but again need to avoid duplication by keeping track not only of the diagrams which have been produced, but also of their rotated and reflected images. The pentagon yields 54 distinct diagrams. \cite{completeSet} shows the complete collection of pentagon and hexagon theorems generated in this way.

A heptagon theorem may be generated from an octagon theorem by merging a pair of points.  A hexagon theorem may be formed from the heptagon theorem by merging a further pair of points.  Admitting such ``four angle'' hexagon theorems expands the number of possibilities to hundreds.  Admitting octagons and heptagons expands the possible theorem count to thousands.   \cite{200Problems}  is a collection of 200 random theorems and their proofs generated by  \cite{problemGenerator} .

\section{Conclusion}

The angles considered in this paper are not the full angles which have their own place in automated deduction in geometry \cite{chou1996automated2}, and which are largely impervious to changes in the diagram caused by simply positioning the points in different locations. The angles we use are well defined: an angle $ABC$ means the non-reflex undirected angle defined by the line segments $AB$ and $BC$. While the angle is well defined, its relationship to the directed angle, which is the basis of, for example, (\ref{decagoneqn}) is highly dependent on the relative location of the points on the diagram.  When an angle passes through $\pi$, the relation between the ordered angle and the unordered non-reflex angle changes and, in our setting, the theorem statement. This is both a blessing and a curse.  A curse because it is difficult to give a succinct general theorem statement without a complicated set of side conditions.  In the context of automated problem generation, however, the curse becomes a blessing, as it proliferates the number of distinct examples which may be generated.  

From a pedagogical standpoint, the problems generated have several advantages.  They can all be solved by applying a small set of geometric facts, however some ingenuity in applying them is called for.  Algebraically, the problems require only the manipulation of linear equations.  Problems can be phrased as the determination of unknown quantities, either numerically or symbolically, or as proof problems.  Finally the availability of machine generated human-readable proofs gives a convenient means of scaffolding.

%
%
%
\bibliographystyle{eptcs}
\bibliography{mybibliography}

\begin{thebibliography}{1}
\providecommand{\bibitemdeclare}[2]{}
\providecommand{\surnamestart}{}
\providecommand{\surnameend}{}
\providecommand{\urlprefix}{Available at }
\providecommand{\url}[1]{\texttt{#1}}
\providecommand{\href}[2]{\texttt{#2}}
\providecommand{\urlalt}[2]{\href{#1}{#2}}
\providecommand{\doi}[1]{doi:\urlalt{https://doi.org/#1}{#1}}
\providecommand{\eprint}[1]{arXiv:\urlalt{https://arxiv.org/abs/#1}{#1}}
\providecommand{\bibinfo}[2]{#2}

\bibitemdeclare{article}{chou1996automated2}
\bibitem{chou1996automated2}
\bibinfo{author}{Shang-Ching \surnamestart Chou\surnameend},
  \bibinfo{author}{Xiao-Shan \surnamestart Gao\surnameend} \&
  \bibinfo{author}{Jing-Zhong \surnamestart Zhang\surnameend}
  (\bibinfo{year}{1996}): \emph{\bibinfo{title}{Automated generation of
  readable proofs with geometric invariants. II. Theorem proving with
  full-angles}}.
\newblock {\slshape \bibinfo{journal}{Journal of Automated Reasoning}}
  \bibinfo{volume}{17}(\bibinfo{number}{3}), pp. \bibinfo{pages}{349--370},
  \doi{10.1007/BF00283134}.

\bibitemdeclare{inproceedings}{Tupero}
\bibitem{Tupero}
\bibinfo{author}{B.~D. \surnamestart Saunders\surnameend}
  (\bibinfo{year}{2015}): \emph{\bibinfo{title}{Matrices with two nonzero
  entries per row}}.
\newblock In: {\slshape \bibinfo{booktitle}{Proceedings of the 2015 ACM on
  International Symposium on Symbolic and Algebraic Computation}}, pp.
  \bibinfo{pages}{323--330}, \doi{10.1109/TIT.1986.1057137}.

\bibitemdeclare{misc}{SequenceForSidePairs}
\bibitem{SequenceForSidePairs}
\bibinfo{author}{N.~J.~A. \surnamestart Sloane\surnameend}:
  \emph{\bibinfo{title}{Sequences A357442 in "The On-Line Encyclopedia of
  Integer Sequences."}}.
\newblock \urlprefix\url{https://oeis.org/A357442}.

\bibitemdeclare{misc}{completeSet}
\bibitem{completeSet}
\bibinfo{author}{P.~H. \surnamestart Todd\surnameend} (\bibinfo{year}{2023}):
  \emph{\bibinfo{title}{The Complete Set of Automatically Generated Angle
  Problems for Cyclic Hexagon and Pentagon. Saltire Technical Report 23-1}}.
\newblock \urlprefix\url{https://saltire.com/download/TR2023-1.pdf}.

\bibitemdeclare{misc}{problemGenerator}
\bibitem{problemGenerator}
\bibinfo{author}{P.~H. \surnamestart Todd\surnameend} (\bibinfo{year}{2023}):
  \emph{\bibinfo{title}{Cyclic Polygon Angle Problem Generator}}.
\newblock \urlprefix\url{https://www.saltire.com/ProblemGenerator}.

\bibitemdeclare{misc}{200Problems}
\bibitem{200Problems}
\bibinfo{author}{P.~H. \surnamestart Todd\surnameend} (\bibinfo{year}{2023}):
  \emph{\bibinfo{title}{A Set of 200 Cyclic Polygon Angle Problems. Saltire
  Technical Report 23-2}}.
\newblock \urlprefix\url{https://saltire.com/download/TR2023-2.pdf}.

\bibitemdeclare{article}{GeometryProofProblems}
\bibitem{GeometryProofProblems}
\bibinfo{author}{P.~H. \surnamestart Todd\surnameend} \&
  \bibinfo{author}{D.~\surnamestart Aley\surnameend} (\bibinfo{year}{2023}):
  \emph{\bibinfo{title}{A program to create new geometry proof problems}}.
\newblock {\slshape \bibinfo{journal}{Annals of Mathematics and Artificial
  Intelligence}}, pp. \bibinfo{pages}{779--795},
  \doi{10.1007/s10472-023-09854-1}.

\bibitemdeclare{article}{AngleTheoremDiscovery}
\bibitem{AngleTheoremDiscovery}
\bibinfo{author}{Philip \surnamestart Todd\surnameend} (\bibinfo{year}{2023}):
  \emph{\bibinfo{title}{Automated Discovery of Angle Theorems}}.
\newblock {\slshape \bibinfo{journal}{Annals of Mathematics and Artificial
  Intelligence}}, \doi{10.1007/s11786-020-00490-0}.

\end{thebibliography}

\end{document}